\documentclass[journal]{IEEEtran}

\usepackage{amsmath,amssymb,amsthm}
\usepackage{graphicx}
\usepackage{cite}
\usepackage{booktabs}
\usepackage{xcolor}
\usepackage{mathtools}
\usepackage{bbm}
\usepackage[utf8]{inputenc}
\usepackage[T1]{fontenc}
\usepackage{comment}
\usepackage{tabularx}
\usepackage{dsfont}
\usepackage{bbold}

\DeclareMathOperator{\diag}{diag}
\DeclareMathOperator{\sign}{sign}
\newcommand{\Real}[1]{\operatorname{Re}\!\left(#1\right)}
\newcommand{\Imag}[1]{\operatorname{Im}\!\left(#1\right)}

\newtheorem{theorem}{Theorem}
\newtheorem{assumption}{Assumption}
\newtheorem{remark}{Remark}

\begin{document}

\title{Complex-Valued Kuramoto Networks:\\
       A Unified Control-Theoretic Framework}

\author{Lorenzo~Giordano,~Josep~M.~Olm,~and~Mario~di~Bernardo%

\thanks{L.~Giordano and M.~di Bernardo are with the Scuola Superiore Meridionale,
        80138 Naples, Italy (e-mail: l.giordano@ssmeridionale.it).}%
\thanks{J.\,M.~Olm is with the Department of Mathematics \&
        Institute of Industrial and Control Engineering,
        Universitat Polit\`{e}cnica de Catalunya,
        08028 Barcelona, Spain (e-mail: josep.olm@upc.edu).
        Corresponding author.}%
\thanks{M.~di~Bernardo is also with the Department of Electrical
        Engineering and Information Technologies,
        Universit\`{a} degli Studi di Napoli Federico~II,
        80125 Naples, Italy
        (e-mail: mario.dibernardo@unina.it).}%
\thanks{The work of J.\,M.~Olm was partially supported by the
        Government of Spain through Project PID2021-122821NB-I00
        funded by MICIU/AEI/10.13039/501100011033 and ERDF/EU,
        and by the Generalitat de Catalunya through
        Project~2021~SGR~00376.}}

\maketitle

% ================================================================
\begin{abstract}
Synchronization in networks of coupled oscillators is classically studied via the Kuramoto model, whose intrinsic nonlinearity limits analytical tractability and complicates control design. Complex-valued extensions circumvent this by embedding phase dynamics into a higher-dimensional linear state space, where regulating complex-state moduli to a common value recovers Kuramoto phase behavior. Existing approaches to address this problem correspond, within a unified control framework, to state-feedback and hybrid reset-based strategies, each with performance constraints. We propose two switched control designs that overcome these limitations: a switched feedforward law ensuring exact phase correspondence at all times, and a feedforward plus sliding-mode law achieving finite-time convergence without spectral gain tuning. Additionally, we present a non-autonomous complex-valued MIMO sliding-mode controller that enforces phase locking at a prescribed frequency in finite time, independent of natural frequencies and coupling strengths. Simulations confirm improved transient response, steady-state accuracy, and robustness, including synchronization of heterogeneous networks where the classical real-valued Kuramoto model fails.
\end{abstract}

\begin{IEEEkeywords}
Kuramoto oscillators, complex-valued systems, sliding-mode
control, feedforward control, synchronization, switched systems,
network control.
\end{IEEEkeywords}

% ================================================================
\section{Introduction}
\label{sec:intro}
Synchronization in networks of coupled oscillators arises across many disciplines, from physics and engineering to biology and neuroscience \cite{dorfler_2014}. A seminal formulation of this collective behavior was introduced by Kuramoto in 1975 \cite{kuramoto_1975}, and the Kuramoto model has since become a cornerstone for studying phase synchronization. It has been widely applied to real-world systems, including flashing fireflies \cite{mirollo_1990}, Josephson junction arrays \cite{wiesenfeld_1996}, chemical oscillators \cite{kiss_2002}, neural dynamics \cite{cabral_2011}, and power networks \cite{Guo_2021}.

Despite its broad use, the Kuramoto model remains analytically and computationally challenging due to its intrinsic nonlinearity, with closed-form solutions only in special cases \cite{muller_2021}. To address this, complex-valued extensions embed the nonlinear phase dynamics into a higher-dimensional \emph{linear} complex state space \cite{muller_2021,budzinski_2022}. These map oscillator phases to complex states that evolve linearly in time, facilitating spectral and modal analysis, while their arguments reproduce the Kuramoto dynamics.

This linear reformulation is crucial from a control perspective, enabling linear control tools to shape collective phase behavior and allowing systematic controller design for synchronization tasks otherwise intractable in the nonlinear setting. Beyond control, such embeddings have been applied to equilibrium classification \cite{nguyen_2023}, complex-valued neural networks \cite{budzinski_2024_exact}, traveling waves in neuroscience \cite{benigno_2023_waves}, image segmentation \cite{liboni_2025_image}, and quantum information theory \cite{amati_2025_quantum}.

Two main control-oriented linear reformulations have emerged in the literature. Roberts \cite{roberts_2008} augments the linear complex model with an input that drives magnitudes to a common value; once equal, the arguments asymptotically match Kuramoto phases, allowing order parameter computation and stability analysis \cite{panteley_2013_2,panteley_2013_1}. Budzinski et al.~\cite{budzinski_2022} instead apply periodic magnitude resets while exploiting the linear solution between them, yielding close agreement with real-valued trajectories over finite time horizons.

Although developed independently, both methods share the same \emph{control objective}: regulate magnitudes so that phase arguments reproduce classical Kuramoto behavior. From a control viewpoint, \cite{roberts_2008} is a continuous state-feedback strategy, while \cite{budzinski_2022} is a hybrid reset-based law. Both exhibit limitations: the former ensures only asymptotic convergence and requires spectral gain tuning, while the latter achieves exact matching only at reset instants, with magnitude drift accumulating between them.

Motivated by this unified perspective, this work leverages the linear complex-valued formulation to address two control problems for the Kuramoto model. First, we aim to exactly reproduce the phase dynamics of the original real-valued system, improving upon \cite{roberts_2008} and \cite{budzinski_2022}. Second, we seek to enforce phase locking at a prescribed frequency, regardless of natural frequencies and coupling strengths, achieving robust and controlled synchronization via complex-valued sliding-mode control (SMC) \cite{doria1}. Hence, the main contributions of this work are:

\begin{enumerate}
  \item A \emph{switched feedforward} law that achieves exact, continuous phase matching with the real-valued
    Kuramoto dynamics for all $t\ge0$ under unit initial magnitudes, overcoming the discrete-instant limitation
    of \cite{budzinski_2022}.
  \item A \emph{feedforward\,+\,SMC} law that upgrades the asymptotic
    convergence of \cite{roberts_2008} to finite-time
    convergence, with an explicit and tunable reaching-time
    bound. No spectral gain tuning is required.  
  \item An extension of the complex-valued MIMO SMC framework
    of \cite{diPalo_2023} to \emph{non-autonomous} systems,
    applied to enforce phase locking at a prescribed frequency
    in finite time, independently of natural frequencies and
    coupling strengths---including heterogeneous networks where
    the classical real-valued model fails to synchronize.
\end{enumerate}

The remainder of the article is organized as follows. Section~\ref{sec:notation} introduces notation. Section~\ref{sec:model} reviews the Kuramoto model and its complex extension. Section~\ref{sec:sota} revisits existing approaches from a control perspective. Section~\ref{sec:ff} presents the proposed feedforward and feedforward plus SMC methods. Section~\ref{sec:csmc} develops the prescribed-frequency SMC strategy. Section~\ref{sec:num} provides numerical validation, and Section~\ref{sec:conc} concludes the paper.

% ================================================================
\section{Notation}
\label{sec:notation}

For $z = z^R + i z^I \in \mathbb{C}$, $z^R=\Real{z}$ and $z^I=\Imag{z}$ are the real and imaginary parts of $z$, respectively. In turn, $z^*$ is the conjugate of $z$, $|z|=\sqrt{z^* z}$ its magnitude, and $\phi_z$ its phase.  

For $w \in \mathbb{C}^n$, $\Real{w}$ and $\Imag{w}$ are vectors that collect real and imaginary parts of $w$, respectively, $w^*$ is the conjugate transpose of $w$, and $|w|$, $\phi_w$ are vectors of magnitudes and phases, respectively.
Moreover, $\cos(\phi_w) \in \mathbb{R}^n$ and $e^{i \phi_w}\in\mathbb{C}^n$ are componentwise defined functions. 

The complex signum function maps each component of a vector to a unit-magnitude value while preserving its original phase \cite{doria1}, i.e. 
$\sign(w) \coloneqq \left(\sign\left(w_k\right)\right) \in \mathbb{C}^n$ with
\begin{equation*}
    \sign\left(w_k\right)=\left\{\begin{array}{cl} \frac{w_k}{\left|w_k\right|} & w_k \neq 0, \\ 0 & w_k = 0.  \\  \end{array}\right.
\end{equation*}

The 1-norm is defined as $\|w\|_1=\sum_{k=1}^n |w_k|$, the 2-norm as $\|w\|_2=\sqrt{w^\ast w}$, and the $\infty$-norm as $\left\|w\right\|_\infty=\sup_k\{\left|w_k\right|\}$. Finally, $\mathds{1}_n$ denotes the $n$-dimensional vector of ones.

% ================================================================
\section{The Kuramoto Model and Complex-Valued Extension}
\label{sec:model}

\subsection{Classical model}

Given a simple undirected graph $\mathcal{G}=(\mathcal{V},
\mathcal{E})$ with $N$ nodes and adjacency matrix $A$, the
Kuramoto model \cite{kuramoto_1975} reads as
\begin{equation}
  \dot{\theta}_k = \omega_k +
    \sigma\!\sum_{j=1}^N a_{kj}\sin(\theta_j-\theta_k),
  \label{eq:real_kuramoto}
\end{equation}
where $\theta_k\in\mathbb{R}$ are oscillator phases,
$\omega_k\in\mathbb{R}$ natural frequencies, $\sigma>0$
coupling strength, and $a_{kj}\in\{0,1\}$. 

Synchronization is
measured by the order parameter
$r(t)=\frac{1}{N}\sum_{k}e^{i\theta_k(t)}$; $|r|=1$ indicates
perfect synchronization.

\subsection{Complex-valued extension and control formulation}

Subtracting $i \cos(\theta_j{-}\theta_k)$ from the
coupling in \eqref{eq:real_kuramoto}, performing straightforward algebraic manipulation
and setting $x=e^{i\theta}$, yields the linear system
\cite{muller_2021}:
\begin{equation} \label{eq:complex_lti}
  \dot{x} = \left(i\diag(\omega)+\sigma A\right)x.
\end{equation}
Splitting \eqref{eq:complex_lti} into magnitude and argument
dynamics gives
\begin{subequations} \label{eq:modarg_ku}
\begin{align}
\label{eq:mo_ku} \frac{\mathrm{d}}{\mathrm{d}t}\lvert x_k \rvert  &=\sigma \sum_{j=1}^N a_{kj}\lvert x_j \rvert\cos\!\left(\phi_{x_j}-\phi_{x_k}\right), \\
\label{eq:arg_ku}\frac{\mathrm{d}}{\mathrm{d}t}\phi_{x_k} &=\omega_k+\sigma \sum_{j=1}^N a_{kj}\tfrac{\lvert x_j \rvert}{\lvert x_k \rvert}\sin\!\left(\phi_{x_j}-\phi_{x_k}\right).
\end{align}
\end{subequations}
for all $k$. If $|x_k|\to c$ for all $k$, then \eqref{eq:arg_ku} matches \eqref{eq:real_kuramoto}. This can be achieved by introducing a control input $u\in\mathbb{C}^N$ in \eqref{eq:complex_lti}:
\begin{equation} \label{eq:complex_kuramoto_vectorial_u}
  \dot{x} = \left(i\diag(\omega)+\sigma A\right)x + u.
\end{equation}

% ================================================================
\section{State-of-the-Art: A Control Perspective}
\label{sec:sota}
This section reviews the existing linear reformulations of
the Kuramoto model from a control-theoretic standpoint, making
explicit the shared objective that motivates the present work.

\subsection{State-feedback control}
\label{subsec:sf}
In~\cite{roberts_2008} the following linear 
system on $\mathbb{C}^N$ is introduced: 
\begin{equation}
  \dot{x}=\left(i\diag(\omega)-\diag(\mu)+\sigma A\right)x,
  \label{eq:roberts}
\end{equation}
where $\mu\in\mathbb{R}^N$ is chosen so that suitable spectral
conditions are satisfied, enabling the arguments of
\eqref{eq:roberts} to reproduce the dynamics of the real-valued
Kuramoto model \eqref{eq:real_kuramoto}. Specifically, the
matrix in \eqref{eq:roberts} must have a single purely imaginary
eigenvalue where all the entries of the associated eigenvector have equal modulus, while the remaining eigenvectors must have strictly negative
real part. Under these conditions, the argument dynamics of
\eqref{eq:roberts} \emph{asymptotically} replicate
\eqref{eq:real_kuramoto}.

The system \eqref{eq:roberts} can be interpreted as the
closed-loop dynamics of \eqref{eq:complex_kuramoto_vectorial_u} under the
state-feedback law $u=-\diag(\mu)x$. The control design
problem then consists in selecting $\mu$ to satisfy the
spectral constraints above.

\subsection{Hybrid reset control}
\label{subsec:hybrid}
The method in \cite{budzinski_2022} uses a time-window approach exploiting the linearity of \eqref{eq:complex_lti}: time is divided into intervals of length $T$, where at each reset the state $x$ is projected onto the unit circle while preserving phases, and then evolves linearly. This enables analytic propagation  over each window via matrix exponentials, avoiding numerical integration and improving computational efficiency.

From a hybrid-control viewpoint, this is described by the flow--jump system
\begin{subequations} \label{eq:hy_flow}
\begin{align}
\dot x &= \left( i \,\mathrm{diag}(\omega) + \sigma A \right)x, \\
\dot \tau &= 1,
\end{align}
\end{subequations}
for $(x,\tau)\in C=\{(x,\tau)\in\mathbb{C}^N\times\mathbb{R}_{\ge0}:0\le\tau<T\}$, with jumps at $\tau=T$ given by
\begin{subequations} \label{eq:hy_jump}
\begin{align}
x^+ &= \mathrm{sign}(x(t)), \\
\tau^+ &= 0,
\end{align}
\end{subequations}
for $(x,\tau)\in D=\{(x,\tau)\in\mathbb{C}^N\times\mathbb{R}_{\ge0}:\tau=T\}$.

This construction matches the real-valued Kuramoto dynamics over finite, nontrivial time  intervals, with accuracy improving for smaller $T$ due to reduced inter-reset amplitude drift.

\subsection{Unified perspective}
\label{subsec:unified}
Although the approaches in \cite{roberts_2008} and \cite{budzinski_2022} were developed independently, they share a common control objective: enforce a uniform modulus on the complex states of \eqref{eq:complex_kuramoto_vectorial_u} so that their arguments reproduce the phase dynamics of \eqref{eq:real_kuramoto}. Both exhibit fundamental performance limitations: asymptotic convergence with constrained gain selection in the former, and discrete-time matching with inter-reset magnitude drift in the latter. Table~\ref{tab:summary} compares these methods with the three controllers proposed here, highlighting the progressive improvements in convergence achieved by the switched and sliding-mode designs introduced in the following sections.

\begin{table}
\centering
\caption{Overview of controllers for the complex-valued
         Kuramoto network \eqref{eq:complex_kuramoto_vectorial_u}.
         FF\,=\,feedforward; SMC\,=\,sliding-mode control.}
\label{tab:summary}
\renewcommand{\arraystretch}{1.15}
\begin{tabularx}{\columnwidth}{@{}l>{\raggedright\arraybackslash}X>{\raggedright\arraybackslash}X@{}}
\toprule
\textbf{Controller} & \textbf{Objective} & \textbf{Convergence} \\
\midrule
State-feedback \cite{roberts_2008}
  & Phase replication & Asymptotic \\
Hybrid reset \cite{budzinski_2022}
  & Phase replication & At reset instants \\
Switched FF (Thm.\,\ref{th:sfc})
  & Phase replication & Exact, $\forall t\ge 0$ \\
FF\,+\,SMC (Thm.\,\ref{th:smc_ff})
  & Phase replication & Finite-time \\
Complex SMC (Thm.\,\ref{th:smc_particularized})
  & Synchronization at prescribed frequency & Finite-time \\
\bottomrule
\end{tabularx}
\end{table}

% ================================================================
\section{Switched Controllers}
\label{sec:ff}

\subsection{Switched feedforward control}
\label{subsec:ff}
First, we introduce a switched feedforward controller that yields an exact correspondence with \eqref{eq:real_kuramoto} for all time, provided that the magnitudes of the state of \eqref{eq:complex_kuramoto_vectorial_u} are initialized at unity.
For each $k\in\{1,\ldots,N\}$ we define the vector fields $f_k,g_k:\mathbb{C}^N
\to\mathbb{R}$ from \eqref{eq:modarg_ku}:
\begin{align*}
f_k(x)&=\sigma\!\sum_{j=1}^N a_{kj}|x_j|
  \cos\!\left(\phi_{x_j}{-}\phi_{x_k}\right),\\
g_k(x)&=\omega_k+\sigma\!\sum_{j=1}^N a_{kj}
  \tfrac{|x_j|}{|x_k|}
  \sin\!\left(\phi_{x_j}{-}\phi_{x_k}\right).
\end{align*}
\begin{theorem}\label{th:sfc}
Let $u=(|u_k|e^{i\phi_{u_k}})\in\mathbb{C}^N$ with
\begin{subequations}\label{eq:u_s}
\begin{align}
  |u_k| &= |f_k(x)|, \label{eq:u_s_mag}\\
  \phi_{u_k} &= \phi_{x_k}+\tfrac{\pi}{2}
    \left(1+\sign(f_k(x))\right). \label{eq:u_s_arg}
\end{align}
\end{subequations}
If $|x_k(0)|=1$ for all $k$, then $|x_k(t)|\equiv1$ for all
$t\ge0$ and the argument dynamics of \eqref{eq:complex_kuramoto_vectorial_u}
coincide exactly with \eqref{eq:real_kuramoto}.
\end{theorem}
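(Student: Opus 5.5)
The plan is to show first that the switched law \eqref{eq:u_s} is a smooth feedback in disguise, and then to read off the magnitude and argument dynamics from the polar decomposition that underlies \eqref{eq:modarg_ku}. The key observation is that, for $x_k\neq 0$, \eqref{eq:u_s} reduces to
\begin{equation*}
u_k=-f_k(x)\,e^{i\phi_{x_k}}=-f_k(x)\,\frac{x_k}{|x_k|}.
\end{equation*}
I will verify this by cases on $\sign(f_k(x))$. If $f_k>0$, then $\phi_{u_k}=\phi_{x_k}+\pi$ and $u_k=-|f_k|e^{i\phi_{x_k}}=-f_ke^{i\phi_{x_k}}$. If $f_k<0$, then $\phi_{u_k}=\phi_{x_k}$ and $u_k=|f_k|e^{i\phi_{x_k}}=-f_ke^{i\phi_{x_k}}$. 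If $f_k=0$, then $|u_k|=0$. In every case $u_k$ is the projection of a radial correction along $x_k$ that exactly cancels the radial drift $f_k$.

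Next I will recompute the polar dynamics with the input present. Writing $x_k=|x_k|e^{i\phi_{x_k}}$, differentiation gives
\begin{equation*}
\frac{\mathrm d}{\mathrm dt}|x_k|=\Real{\dot x_k e^{-i\phi_{x_k}}},\qquad \dot\phi_{x_k}=\frac{\Imag{\dot x_k e^{-i\phi_{x_k}}}}{|x_k|}.
\end{equation*}
The uncontrolled part of $\dot x_k$ in \eqref{eq:complex_kuramoto_vectorial_u} contributes $f_k$ and $g_k$, exactly as in \eqref{eq:modarg_ku}. The input contributes
\begin{equation*}
u_ke^{-i\phi_{x_k}}=-f_k(x)\in\mathbb R.
\end{equation*}
Hence the controlled system in polar coordinates reads
\begin{equation*}
\frac{\mathrm d}{\mathrm dt}|x_k|=f_k(x)-f_k(x)=0,\qquad \dot\phi_{x_k}=g_k(x).
\end{equation*}
So every modulus is constant along solutions, which gives $|x_k(t)|\equiv|x_k(0)|=1$. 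Substituting $|x_j|/|x_k|=1$ into $g_k$ turns the argument equation into \eqref{eq:real_kuramoto} with $\theta_k=\phi_{x_k}$.

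The step I expect to need the most care is well-posedness, since the law is written with a discontinuous signum and with phases that are undefined at $x_k=0$. The reformulation above removes the discontinuity, because $u$ is continuous and in fact smooth on the open set $\{x:\ x_k\neq 0\ \forall k\}$. On that set the closed-loop vector field is locally Lipschitz, so local existence and uniqueness of solutions follow. The computation then shows that the torus $\{|x_k|=1\ \forall k\}$ is invariant, and it stays a positive distance from the singular set. Because the torus is compact and the vector field is bounded on it, the solution cannot escape in finite time and extends to all $t\ge0$. Uniqueness also guarantees that no spurious Filippov-type solutions arise from the switching in \eqref{eq:u_s_arg}.
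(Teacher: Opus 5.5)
Your proof is correct and follows essentially the paper's route: you split the controlled dynamics into modulus and argument equations and check that the input cancels the radial drift $f_k(x)$ exactly while contributing nothing to $\dot\phi_{x_k}$; the paper does this through the identities $|f_k|\cos\bigl(\tfrac{\pi}{2}(1+\sign f_k)\bigr)=-f_k$ and $|f_k|\sin\bigl(\tfrac{\pi}{2}(1+\sign f_k)\bigr)=0$ rather than your explicit three-case check. Your closed form $u_k=-f_k(x)\,x_k/|x_k|$, which shows the ``switched'' law is actually smooth wherever all $x_k\neq0$, and your Lipschitz-plus-compact-invariant-torus argument for global existence and uniqueness are worthwhile additions that the paper does not provide.
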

\begin{proof} A direct computation shows that the decomposition of \eqref{eq:complex_kuramoto_vectorial_u} into magnitude and argument dynamics is, for all $k \in \{1,\ldots,N\}$,
\begin{subequations} \label{modarg_smc1}
\begin{align}
\label{eq:mod_smc1}  \frac{\mathrm{d}}{\mathrm{d}t} \lvert x_k \rvert  =& \, f_k(x) 
 +\lvert u_k \rvert \cos\!\left(\phi_{u_k}-\phi_{x_k}\right), \\ 
 \label{eq:arg_smc1} \frac{\mathrm{d}}{\mathrm{d}t} \phi_{x_k}  =& \, g_k(x) + \frac{\lvert u_k \rvert}{\lvert x_k \rvert}\sin\!\left(\phi_{u_k}-\phi_{x_k}\right).
\end{align}
\end{subequations}
Under the control law \eqref{eq:u_s}, \eqref{eq:mod_smc1} simplifies to
\begin{equation*}
\label{11}\frac{\mathrm{d}}{\mathrm{d}t}\lvert x_k \rvert  = f_k(x) - \lvert f_k(x) \rvert \sign \left(f_k(x)\right)=0.
\end{equation*}
Hence, when \eqref{eq:complex_kuramoto_vectorial_u} is initialized with $\lvert x(0) \rvert = \mathds{1}_N$, then $\lvert x(t) \rvert = \mathds{1}_N$, for all $t\geq 0$. Substituting this identity into \eqref{eq:arg_smc1} yields
\begin{equation} 
\left. \frac{\mathrm{d}}{\mathrm{d}t} \phi_{x_k} \right|_{\lvert x \rvert = \mathds{1}_N} = \omega_k+\sigma \sum_{j=1}^N a_{kj}\sin\!\left(\phi_{x_j}-\phi_{x_k}\right),
\label{eq:controlled_arguments}
\end{equation}
because $\lvert f_k(x) \rvert \sin\!\left(\frac{\pi}{2}\left(1+\sign\left(f_k(x)\right)\right)\right)=0$, $\forall \, x \in \mathbb{C}^N$.
Hence, the dynamics of the arguments match those of the real Kuramoto \eqref{eq:real_kuramoto}.
\end{proof}
\begin{remark}
Theorem~\ref{th:sfc} improves on \cite{budzinski_2022}: phase
correspondence holds for all $t\ge0$ rather than only at
discrete reset instants. The trade-off is numerical integration
of the nonlinear closed loop, versus the analytic solution of
\cite{budzinski_2022}.
\end{remark}

\subsection{Feedforward action with SMC}
\label{subsec:ffsmc}
To overcome the requirement of unit initial magnitudes, we augment the controller \eqref{eq:u_s}  with an SMC term that compensates for any drift of the state magnitudes away from unity.

\begin{theorem}\label{th:smc_ff}
Let $u=u^1+u^2$, with $u^1$ defined by \eqref{eq:u_s} and
\begin{subequations}\label{eq:u_smc}
\begin{align}
  |u^2_k| &= \alpha \lvert\sign(|x_k|-1)\rvert, \\
  \phi_{u^2_k} &= \phi_{x_k}+\tfrac{\pi}{2}
    \left(1+\sign(|x_k|-1)\right),
\end{align}
\end{subequations}
with $\alpha\in\mathbb{R}^+$. Then, for any initial condition, the arguments of \eqref{eq:complex_kuramoto_vectorial_u} converge to \eqref{eq:real_kuramoto} in finite time
\begin{equation}\label{eq:rt1}
  T \leq \frac{\sqrt{2}}{\alpha} \lVert \lvert x(0)\rvert -\mathds{1}_N \rVert_2.
\end{equation}
\end{theorem}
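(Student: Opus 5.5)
We follow the proof of Theorem~\ref{th:sfc} and add a Lyapunov reaching argument on the magnitude errors. The decomposition \eqref{modarg_smc1} still holds with $u_k=u^1_k+u^2_k$. Since both $u^1_k$ and $u^2_k$ have phases equal to $\phi_{x_k}$ or $\phi_{x_k}+\pi$, each contributes a purely radial term. Hence
\begin{equation*}
\frac{\mathrm{d}}{\mathrm{d}t}|x_k| = f_k(x) - |f_k(x)|\sign(f_k(x)) - \alpha\,|\sign(|x_k|-1)|\,\sign(|x_k|-1) = -\alpha\,\sign(|x_k|-1),
\end{equation*}
where we use that $\cos\bigl(\tfrac{\pi}{2}(1+s)\bigr)=-s$ for $s\in\{-1,0,1\}$ and $|s|s=s$. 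By the same token, $\sin\bigl(\tfrac{\pi}{2}(1+s)\bigr)=0$, so both control terms drop out of \eqref{eq:arg_smc1}. The argument dynamics are therefore $\dot\phi_{x_k}=g_k(x)$ for all $t$, regardless of the magnitudes.

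Next, we define the sliding variable $s_k=|x_k|-1$ and the Lyapunov function $V=\tfrac12\|s\|_2^2$. Along trajectories,
\begin{equation*}
\dot V=\sum_k s_k\dot s_k=-\alpha\sum_k|s_k|=-\alpha\|s\|_1\le-\alpha\|s\|_2=-\alpha\sqrt{2V}.
\end{equation*}
This gives $\frac{\mathrm{d}}{\mathrm{d}t}\sqrt{V}\le-\alpha/\sqrt2$, so $V$ reaches zero no later than
\begin{equation*}
T=\frac{\sqrt2}{\alpha}\sqrt{V(0)}=\frac{1}{\alpha}\|s(0)\|_2\le\frac{\sqrt2}{\alpha}\|s(0)\|_2,
\end{equation*}
which is \eqref{eq:rt1}. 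Alternatively, each component obeys $\dot s_k=-\alpha\sign(s_k)$ and reaches zero at $|s_k(0)|/\alpha\le\|s(0)\|_2/\alpha$, which gives the same bound with room to spare. After $T$, every $s_k=0$ is invariant: on $s_k=0$ the derivative is zero, and it pushes back toward zero in a neighbourhood. So $|x(t)|=\mathds{1}_N$ for $t\ge T$, and from then on the argument equation is exactly \eqref{eq:controlled_arguments}, the real Kuramoto model \eqref{eq:real_kuramoto}.

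The main obstacle is rigor around the discontinuity. The right-hand side is discontinuous on $\{|x_k|=1\}$, so solutions should be understood in the Filippov sense. To keep the sliding motion on $s_k=0$, I would argue componentwise via the scalar equation $\dot s_k=-\alpha\sign(s_k)$, whose Filippov solutions are unique and stay at zero. A second issue is to keep $|x_k|$ bounded away from zero, so that $\phi_{x_k}$ and $g_k$ are well defined. I would handle this by noting that $|x_k|$ moves monotonically toward $1$. Hence $|x_k(t)|\ge\min(|x_k(0)|,1)>0$ whenever $x_k(0)\neq0$, which is assumed implicitly in "any initial condition" since phases must be defined.
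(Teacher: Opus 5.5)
Your proof is correct and follows the paper's own argument. You use the same radial/tangential split of the closed loop, the same Lyapunov function $V=\tfrac12\|\,|x|-\mathds{1}_N\|_2^2$ together with $\|\cdot\|_1\ge\|\cdot\|_2$, and the same observation that the control never enters the argument equation. Integrating $\dot V\le-\alpha\sqrt{2V}$ explicitly actually gives you the sharper bound $\|\,|x(0)|-\mathds{1}_N\|_2/\alpha$, and your Filippov and $|x_k(t)|\ge\min(|x_k(0)|,1)$ remarks add rigor the paper omits.

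One small correction: $\sin\bigl(\tfrac{\pi}{2}(1+s)\bigr)=1$ at $s=0$, not $0$. The control terms still drop out of the argument equation, but the reason is that the multiplying magnitudes $|f_k(x)|$ and $\alpha|\sign(|x_k|-1)|$ are zero exactly when the corresponding sign is zero. The sine itself does not vanish there.
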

\begin{proof} With the composite control input $u=u^1+u^2$, system \eqref{eq:complex_kuramoto_vectorial_u} can be decomposed as
\begin{subequations} \label{modarg_smc2}
\begin{align}
\label{eq:mod_smc2}  \frac{\mathrm{d}}{\mathrm{d}t}\lvert x_k \rvert = &f_k(x) 
 +\sum_{l=1}^2\lvert u^l_k \rvert \cos\!\left(\phi_{u^l_k}-\phi_{x_k}\right), \\ 
 \label{eq:arg_smc2} \frac{\mathrm{d}}{\mathrm{d}t}\phi_{x_k} = &g_k(x) + \sum_{l=1}^2\frac{\lvert u^l_k\rvert}{\lvert x_k \rvert}\sin\!\left(\phi_{u^l_k}-\phi_{x_k}\right).
\end{align}
\end{subequations}
Under the control laws \eqref{eq:u_s} and \eqref{eq:u_smc}, \eqref{eq:mod_smc2} reduces to
\begin{align*}  
\frac{\mathrm{d}}{\mathrm{d}t} \lvert x_k \rvert & = -\alpha \lvert \sign\left(\lvert x_k \rvert-1\right)\rvert\sign\left(\lvert x_k \rvert-1\right)=\\ \label{mod_smc} & = -\alpha \sign\left(\lvert x_k \rvert-1\right).
\end{align*}
The equilibrium of this equation is $\lvert x \rvert=\mathds{1}_N$. To analyze stability, consider the Lyapunov function candidate $V\!\left( \lvert x \rvert\right)=\tfrac{1}{2}\lVert \lvert x(0)\rvert -\mathds{1}_N \rVert_2^2$, which is positive definite. Its time derivative~is 
\begin{equation*}
    \dot V = -\alpha \left(\lvert x \rvert-\mathds{1}_N\right)^\top\sign\!\left(\lvert x \rvert-\mathds{1}_N\right)=-\alpha \left\|\lvert x \rvert-\mathds{1}_N\right\|_1.
\end{equation*}
Using the inequality $\left\|\cdot\right\|_1 \geq \left\|\cdot\right\|_2$, it follows that 
\begin{equation*}
    \dot V \leq -\alpha \left\|\lvert x \rvert-\mathds{1}_N\right\|_2=-\alpha \sqrt{2V}.
\end{equation*}
Therefore, $\dot V$ is negative definite, which establishes finite-time convergence to the switching manifold  $\{x \in \mathbb{C}^N: \lvert x \rvert=\mathds{1}_N\}$ within the time bound described by \eqref{eq:rt1}. Regarding the arguments, with the control laws \eqref{eq:u_s} and \eqref{eq:u_smc}, \eqref{eq:arg_smc2} yields $\dot{\phi}_{x_k}  = g_k(x)$,
for all $t \geq 0$, since the control contribution vanishes. Specifically, the contribution due to $u^1$ vanishes for the same reason of Theorem~\ref{th:sfc}, while for the contribution due to $u^2$, for all $x \in \mathbb{C}^N$ it holds that  $\frac{\lvert u^2_k \rvert}{\lvert x_k \rvert}\sin\!\left(\phi_{u^2_k}-\phi_{x_k}\right) =0$ because  $\sin\!\left(\frac{\pi}{2}\left(1+\sign\left(\lvert x_k \rvert-1\right)\right)\right)=0$.
Therefore, for all $t \geq T$ we have $\lvert x \rvert = \mathds{1}_N$, and the argument dynamics reduce to \eqref{eq:controlled_arguments}, which coincides with the dynamics of the real-valued Kuramoto model~\eqref{eq:real_kuramoto}.
\end{proof}
\begin{remark}
Theorem~\ref{th:smc_ff} improves on \cite{roberts_2008}:
convergence is in finite-time, upperbounded by \eqref{eq:rt1} and tunable via $\alpha$, and no spectral tuning is
needed. A residual phase offset arises from transient magnitude
evolution before the sliding surface is reached; it can be reduced by increasing $\alpha$ and disappears when $x_0=e^{i\theta_0}$, as we fall back to the case of Theorem \ref{th:sfc}
\end{remark}

% ================================================================
\section{Synchronization via Complex-valued SMC}
\label{sec:csmc}
The controllers of Section~\ref{sec:ff} reproduce whatever
synchronization emerges in \eqref{eq:real_kuramoto}. We now aim at
\emph{actively enforcing} synchronization at a prescribed frequency. To this end, we extend the MIMO complex-valued SMC proposed in \cite{diPalo_2023} to a class of non-autonomous systems.

Consider the complex-valued, non-autonomous system
\begin{equation}\label{eq:cs}
  \dot{x}=F(x,t)+G(x,t)u,\quad x\in\mathbb{C}^N,\;u\in\mathbb{C}^M,
\end{equation}
where $F\!:\!\Omega \times \mathbb{R} \to \mathbb{C}^N$ and $G\!:\!\Omega \times \mathbb{R} \to \mathbb{C}^{N \times M}$ with $\Omega \subseteq \mathbb{C}^N$, are assumed to be holomorphic in $x$ and continuous in $t$. Let the complex switching manifold be defined by
\begin{equation} \label{sman}
    \mathcal{S}=\{(x,t) \in \Omega \times \mathbb{R}_{\geq 0}: s(x,t)=0\},
\end{equation}
where the complex switching function $s\!:\!\Omega \times \mathbb{R} \to \mathbb{C}^M$ is also assumed holomorphic in $x$ and continuous in $t$. Finally, we define the matrix-valued function $D(x,t)=\tfrac{\partial s}{\partial x}G(x,t)$.

\begin{assumption}\label{ass:smc}
Assume that $D(x,t)$ is diagonal and that there exist $\Omega_1 \subseteq \Omega$ with $\Omega_1 \times \mathbb{R}_{\geq 0} \cap \mathcal{S} \neq \emptyset$, a gain vector $K \in \mathbb{C}^M$, and constants $\epsilon_1,\epsilon_2 \in \mathbb{R}^+$ such that, for all $(x,t) \in \Omega_1 \times\mathbb{R}_{\geq 0}$,
\vspace{-4mm}
\begin{subequations} \label{eq:cond}
    \begin{align} 
        \label{eq:cond0}  & \lvert D_{ii}(x,t) \rvert \geq \epsilon_1, \\
        \label{eq:cond1}  & \lvert K_i \rvert \cos\!\left(\phi_{D_i}(x,t)+\phi_{K_i}\right)-\lvert u_{\mathrm{eq}_i}(x,t) \rvert \geq \epsilon_2,
    \end{align}
\end{subequations} 
for all $i=1,\ldots,M$, where $u_{\mathrm{eq}} \in \mathbb{C}^M$ denotes the equivalent control \cite{edwards_1998} defined by
\begin{equation} \label{eq:ueq}
    u_{\mathrm{eq}}(x,t) =-D^{-1}(x,t) \left(\tfrac{\partial s}{\partial x}(x,t)F(x,t) + \tfrac{\partial s}{\partial t}(x,t)\right).
\end{equation}
\end{assumption}
\begin{theorem}\label{th:general_smc}
Suppose that Assumption \ref{ass:smc} holds. Then, the switched control action 
\begin{equation} \label{eq:cu}
    u =-\diag(K) \sign(s)
\end{equation}
induces a sliding motion of \eqref{eq:cs} on $\Omega_1 \times \mathbb{R}_{\geq 0} \cap \mathcal{S}$. Moreover, $\mathcal{S}$ is reached in a finite time $T \leq \frac{\sqrt{2}}{\left(\epsilon_1\epsilon_2\right)} \lVert s\left(x(0),0\right) \rVert_2$.
\end{theorem}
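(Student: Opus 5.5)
We will use the Lyapunov function $V=\tfrac12\|s\|_2^2=\tfrac12 s^*s$ and show that it satisfies a finite-time differential inequality, in the same way as the proof of Theorem~\ref{th:smc_ff}.

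\textbf{Step 1: derivative of $s$.} Since $s$ is holomorphic in $x$ and continuous in $t$, along trajectories of \eqref{eq:cs} the chain rule gives
\begin{equation*}
\dot s=\tfrac{\partial s}{\partial x}F+\tfrac{\partial s}{\partial t}+Du .
\end{equation*}
No $\bar x$-derivative term appears because $s$ is holomorphic in $x$. Using \eqref{eq:ueq}, we rewrite this as $\dot s=D(u-u_{\mathrm{eq}})$. Substituting \eqref{eq:cu} and using that $D$ is diagonal, we get componentwise
\begin{equation*}
\dot s_i=-D_{ii}\left(K_i\,\sign(s_i)+u_{\mathrm{eq}_i}\right).
\end{equation*}

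\textbf{Step 2: derivative of $V$.} We have $\dot V=\sum_i \Real{s_i^*\dot s_i}$. For $s_i\neq 0$, write $s_i^*\sign(s_i)=|s_i|$. Then
\begin{equation*}
\Real{s_i^*\dot s_i}=-|s_i|\,|D_{ii}|\,|K_i|\cos(\phi_{D_i}+\phi_{K_i})-\Real{D_{ii}s_i^*u_{\mathrm{eq}_i}} .
\end{equation*}
We bound the last term in absolute value by $|D_{ii}|\,|s_i|\,|u_{\mathrm{eq}_i}|$. By \eqref{eq:cond1} and then \eqref{eq:cond0}, this yields
\begin{equation*}
\Real{s_i^*\dot s_i}\le -|D_{ii}|\,|s_i|\,\epsilon_2\le-\epsilon_1\epsilon_2|s_i| .
\end{equation*}
Summing over $i$ gives $\dot V\le-\epsilon_1\epsilon_2\|s\|_1\le-\epsilon_1\epsilon_2\|s\|_2=-\epsilon_1\epsilon_2\sqrt{2V}$.

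\textbf{Step 3: reaching time.} Integrating the inequality for $\sqrt{V}$ gives $\tfrac{d}{dt}\sqrt{2V}\le-\epsilon_1\epsilon_2$. Hence $\|s\|_2$ reaches zero no later than $\|s(x(0),0)\|_2/(\epsilon_1\epsilon_2)$. This is within the stated bound, whose extra factor $\sqrt2$ only makes it more conservative, matching the convention of \eqref{eq:rt1}. Once $s=0$, the reaching condition $\Real{s_i^*\dot s_i}<0$ holds on both sides of each component surface. This ensures that trajectories remain on $\mathcal S$ in the Filippov sense, with $u$ effectively equal to $u_{\mathrm{eq}}$, so a sliding motion takes place on $\Omega_1\times\mathbb{R}_{\ge0}\cap\mathcal S$.

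\textbf{Main obstacle.} The delicate point is justifying that trajectories stay in $\Omega_1$ during the reaching phase, since the estimates hold only there. We would handle this by assuming $x(0)$ lies in an invariant subset of $\Omega_1$, for instance a sublevel set of $V$ contained in $\Omega_1$. A second point is making the Filippov argument rigorous at $s_i=0$, where $\sign$ is discontinuous. We would treat $V$ as a nonsmooth Lyapunov function and note that the componentwise inequality holds for almost all $t$.
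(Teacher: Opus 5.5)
Your argument is correct and is essentially the Lyapunov reaching argument with $V=\tfrac12\|s\|_2^2$ that the paper only sketches by deferring to Theorem~4 of \cite{diPalo_2023}, with the explicit $\partial s/\partial t$ term absorbed into $u_{\mathrm{eq}}$ and dominated through \eqref{eq:cond1}; your computation even gives the sharper bound $\|s(x(0),0)\|_2/(\epsilon_1\epsilon_2)$, which implies the stated one. The paper also leaves implicit that trajectories stay in $\Omega_1$ during the reaching phase, so you were right to flag it and to make it an explicit hypothesis.
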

\begin{proof}
The proof follows the argument of Theorem~4 in \cite{diPalo_2023}. The extension to the
non-autonomous setting requires two modifications relative to
the original autonomous case. First, the time derivative of the
switching function $s(x,t)$ now contains the explicit partial
derivative $\partial s/\partial t$, which contributes to the
equivalent control \eqref{eq:ueq} and must be dominated by the gain $K$. Second, the reaching-time bound must
absorb this additional term, which is accommodated through
condition~\eqref{eq:cond1} via the constant $\epsilon_2$.
\end{proof}

Building on Theorem \ref{th:general_smc}, we can now introduce a control strategy to actively enforce synchronization at a prescribed frequency $\bar{\omega} \in \mathbb{R}_{>0}$. To this end, we choose the switching functions as
\begin{equation} \label{eq:switching_functions}
    s(x,t) \coloneqq x-e^{i \bar{\omega} t} \mathds{1}_N,
\end{equation}
which is holomorphic in $x$ and continuous in $t$. On the associated switching manifold $\mathcal{S}$ defined using \eqref{sman}, it holds for all $k\in\{1,\ldots,N\}$ that $\lvert x_k \rvert=1$ and $\phi_{x_k}=\bar{\omega} t$, indeed having all the oscillators  phase locked at frequency $\bar{\omega}$. A sufficient condition for this is given in the following theorem.
\begin{theorem} \label{th:smc_particularized}
Let $K\in \mathbb{R}^N$ satisfy
\begin{equation} \label{eq:Ki} 
    K_i \geq \omega_i+\bar{\omega}+\sigma \left(N-1\right), \; i=1,\ldots,N.
\end{equation}
Then, the discontinuous control law \eqref{eq:cu} with switching surface defined by \eqref{eq:switching_functions} induces phase locking at frequency $\bar{\omega}$ for the complex-valued Kuramoto system \eqref{eq:complex_kuramoto_vectorial_u} in finite time
\begin{equation} \label{eq:rt2}
    T \leq \frac{\sqrt{2}}{\epsilon_2} \lVert x(0)-\mathds{1}_N \rVert_2.
\end{equation}
\end{theorem}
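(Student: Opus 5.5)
The plan is to obtain the result as a direct specialization of Theorem~\ref{th:general_smc}. Accordingly, I verify Assumption~\ref{ass:smc} for system \eqref{eq:complex_kuramoto_vectorial_u} written in the form \eqref{eq:cs} with $F(x,t)=(i\diag(\omega)+\sigma A)x$, $G=I_N$ (so $M=N$), and switching function \eqref{eq:switching_functions}.

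\textbf{Step 1: $D$ and the equivalent control.} Since $\partial s/\partial x=I_N$, we get $D=I_N$. This is diagonal, with $|D_{ii}|=1$, so \eqref{eq:cond0} holds with $\epsilon_1=1$ and $\phi_{D_i}=0$. Next, $\partial s/\partial t=-i\bar\omega e^{i\bar\omega t}\mathds{1}_N$, and \eqref{eq:ueq} gives
\begin{equation*}
u_{\mathrm{eq}}(x,t)=-\left(i\diag(\omega)+\sigma A\right)x+i\bar\omega e^{i\bar\omega t}\mathds{1}_N .
\end{equation*}
Because $K\in\mathbb{R}^N$ is taken with $K_i>0$, we have $\phi_{K_i}=0$, and condition \eqref{eq:cond1} reduces to $K_i-|u_{\mathrm{eq}_i}(x,t)|\ge\epsilon_2$.

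\textbf{Step 2: bounding $|u_{\mathrm{eq}_i}|$ on a suitable $\Omega_1$.} By the triangle inequality,
\begin{equation*}
|u_{\mathrm{eq}_i}|\le|\omega_i||x_i|+\sigma\sum_{j}a_{ij}|x_j|+\bar\omega .
\end{equation*}
A simple graph has at most $N-1$ neighbours, so $\sum_j a_{ij}\le N-1$. If the magnitudes are of order one, this gives $|u_{\mathrm{eq}_i}|\lesssim\omega_i+\bar\omega+\sigma(N-1)$, which is exactly the right-hand side of \eqref{eq:Ki}.

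This is the main obstacle: the bound must hold uniformly on a region $\Omega_1$ that contains the manifold and is forward invariant, and it must leave a strict margin $\epsilon_2>0$. My first attempt is to choose $\Omega_1=\{x:\|x\|_\infty\le 1\}$ (or $\|x\|_\infty\le\rho$ with $\rho$ slightly larger than $1$). On the manifold we have $|x_k|=1$, so $\Omega_1\times\mathbb{R}_{\ge0}\cap\mathcal{S}\neq\emptyset$. The strict margin I would obtain by reading \eqref{eq:Ki} with a slack, i.e.\ taking $\epsilon_2:=\min_i\big(K_i-\omega_i-\bar\omega-\sigma(N-1)\big)$, which is assumed positive (or is made positive by a slightly stronger gain). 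Signs of $\omega_i$ are handled by writing $|\omega_i|$.

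For invariance I would use a reaching argument. Each $s_i$ obeys $\dot s_i=-u_{\mathrm{eq}_i}-K_i\sign(s_i)$, which gives
\begin{equation*}
\tfrac{d}{dt}|s_i|\le|u_{\mathrm{eq}_i}|-K_i\le-\epsilon_2 .
\end{equation*}
Hence each $|x_i-e^{i\bar\omega t}|$ is nonincreasing. Since $|x_i|\le 1+|s_i|$, a trajectory starting in the ball stays in a bounded set on which the estimate is valid, provided $\Omega_1$ is taken large enough to contain $\{|s_i|\le|s_i(0)|\}$. The remaining gap is that the gain bound must then be checked on that larger set. I would either accept $\Omega_1$ as a neighbourhood where the magnitudes are at most $1$, as the authors implicitly do, or note the required inflation.

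\textbf{Step 3: conclusion.} With Assumption~\ref{ass:smc} verified, Theorem~\ref{th:general_smc} yields sliding on $\mathcal{S}$, where $x_k=e^{i\bar\omega t}$, i.e.\ $|x_k|=1$ and $\phi_{x_k}=\bar\omega t$ for every $k$, which is phase locking at $\bar\omega$. The reaching time is bounded by $\tfrac{\sqrt2}{\epsilon_1\epsilon_2}\|s(x(0),0)\|_2$. Using $\epsilon_1=1$ and $s(x(0),0)=x(0)-\mathds{1}_N$, this is exactly \eqref{eq:rt2}.
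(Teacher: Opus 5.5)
Your route is the paper's. With $F=(i\diag(\omega)+\sigma A)x$ and $G=I_N$ you get $D=I_N$, $\epsilon_1=1$ and $u_{\mathrm{eq}}=i\bar\omega e^{i\bar\omega t}\mathds{1}_N-(i\diag(\omega)+\sigma A)x$; you then bound $|u_{\mathrm{eq}_i}|$ by $\bar\omega+|\omega_i|+\sigma(N-1)$ and invoke Theorem~\ref{th:general_smc}. Your componentwise margin $\min_i(K_i-|\omega_i|-\bar\omega-\sigma(N-1))$ is a little more careful than \eqref{eps2}. The latter is zero under equality in \eqref{eq:Ki}, and it can be negative for heterogeneous $\omega$ with node-dependent $K_i$.

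However, the step you call the main obstacle is a genuine gap, and the paper does not close it either. Its estimate is written only for $(x,t)\in\mathcal{S}$, i.e.\ with $|x_j|=1$, which amounts to your unit polydisc. A bound valid on $\mathcal{S}$ only certifies that sliding, once established, persists. The unit polydisc is not forward invariant and does not contain a generic $x(0)$ (e.g.\ the $|x_k(0)|\in(0,2)$ of the simulations). So Theorem~\ref{th:general_smc} yields no reaching guarantee, and accepting $\Omega_1$ ``as the authors implicitly do'' is not an argument.

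A sharper estimate cannot rescue the statement with a gain independent of $x(0)$, because the claim is false globally. Take $N=2$, $a_{12}=1$, $\omega_1=\omega_2$, $K_1=K_2=K$ satisfying \eqref{eq:Ki}, and $x(0)=M\mathds{1}_2$ with $M>K/\sigma$. The diagonal $x_1=x_2=y$ is invariant, with $\dot y=(i\omega_1+\sigma)y-K\sign(y-e^{i\bar\omega t})$. Hence $\tfrac{d}{dt}|y|\ge\sigma|y|-K>0$ for all $t$, so $|y|$ grows exponentially and $s$ never vanishes. The open-loop mode $i\omega_1+\sigma$ is unstable, and an input of amplitude $K$ cannot dominate it far from $\mathcal{S}$.

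Your ``inflation'' option is the right repair, but it turns the theorem into a local or semi-global one. On $\Sigma_r=\{(x,t):\lVert s(x,t)\rVert_\infty\le r\}$ one has $|x_j|\le 1+r$, hence $|u_{\mathrm{eq}_i}|\le\bar\omega+(|\omega_i|+\sigma(N-1))(1+r)$. Suppose every $K_i$ exceeds this by some $\epsilon>0$. Then $\tfrac{d}{dt}|s_i|\le-\epsilon$ on $\Sigma_r$ whenever $s_i\neq0$. Therefore $\Sigma_r$ is forward invariant, each $s_i$ vanishes within $|s_i(0)|/\epsilon\le\lVert x(0)-\mathds{1}_N\rVert_2/\epsilon$, and sliding is maintained afterwards.

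Under \eqref{eq:Ki} with slack $\epsilon_2>0$, this holds for $r=(\epsilon_2-\epsilon)/(\lVert\omega\rVert_\infty+\sigma(N-1))$. The choice $\epsilon=\epsilon_2/\sqrt{2}$ recovers \eqref{eq:rt2} verbatim, but only for $\lVert x(0)-\mathds{1}_N\rVert_\infty\le r$. Conversely, to cover a prescribed ball of radius $r$, the gain must be raised to $K_i\ge\bar\omega+(|\omega_i|+\sigma(N-1))(1+r)+\epsilon$. One of these restrictions has to be written into the statement for the proof to be complete.
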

\begin{proof}
We map \eqref{eq:complex_kuramoto_vectorial_u} to the general system \eqref{eq:cs} by defining $F(x,t)=\left(i \diag(\omega)+\sigma A\right)x$, $G(x,t)=\mathbb{I}_N$. Both vector fields are holomorphic in $x$ and, being autonomous, are trivially continuous in $t$. Since $\frac{\partial s}{\partial x}(x,t)=\mathbb{I}_N$, it follows that $D(x,t) = \mathbb{I}_N$. Thus, condition \eqref{eq:cond0} is satisfied with $\epsilon_1=1$. 

Let $a_k^\top$ denote the $k$-th row of the adjacency matrix $A$. Because the underlying graph is assumed to be simple we have $a_{kk}=0$, and thus $a_k^\top x = \sum_{\substack{j=1 \\ j \neq k}}^N a_{kj} x_j = P_k^\top \left(a_k\right) P_k\left(x\right)$,
where $P_k:\mathbb{R}^N \rightarrow \mathbb{R}^{N-1}$ defines a projection operation that removes the $k$-th component of a vector. The equivalent control \eqref{eq:ueq} reduces to $u_{\mathrm{eq}}(x,t) = i \bar{\omega}e^{i \bar{\omega}t} \mathds{1}_N - \left(i \diag\left(\omega\right)+\sigma A\right)x$.
Hence, for all $k=1,\ldots,N$, and all $(x,t) \in \mathcal{S}$:
\begin{align*}
\left| u_{{eq}_k}(x,t) \right| & =\left| i \bar{\omega}e^{i \bar{\omega}t}-\left(i \omega_k x_k+\sigma a_k^\top x\right) \right| \leq \\ & \leq \bar{\omega}\left| e^{i \bar{\omega}t}\right| + \omega_k\left|  x_k\right|+\sigma\left\| P_k\left(a_k\right) \right\|_2\left\| P_k\left(x\right) \right\|_2 \leq \\ & \leq \bar{\omega} + \omega_k + \sigma \left\| \mathds{1}_{N-1} \right\|_2^2= \omega_k + \bar{\omega} + \sigma \left(N-1\right),
\end{align*}
where we have used that $\left\| P_k\left(a_k\right) \right\|_2 \leq \left\| \mathds{1}_{N-1} \right\|_2$, with equality attained for complete graphs. So \eqref{eq:cond1} holds with 
\begin{equation}\label{eps2} \epsilon_2 := \min\left\{K_i\right\}_{i=1,\ldots,N}-\left(\left\|\omega\right\|_\infty+ \bar{\omega} +\sigma \left(N-1\right)\right).
\end{equation}
Therefore, Assumption~\ref{ass:smc} is satisfied and by Theorem \ref{th:general_smc}, a sliding motion is induced on $\mathcal{S}$, which is reached in finite time defined by \eqref{eq:rt2}.
\end{proof}
\begin{remark}
The gain condition \eqref{eq:Ki} is sufficient but conservative,
derived assuming all-to-all coupling. Phase locking can be
achieved with smaller gains on sparse networks.
\end{remark}

% ================================================================
\section{Numerical Validation}
\label{sec:num}

\begin{figure}[t]
  \centering
  \begin{tabular}{cc}
    \includegraphics[width=0.47\columnwidth]{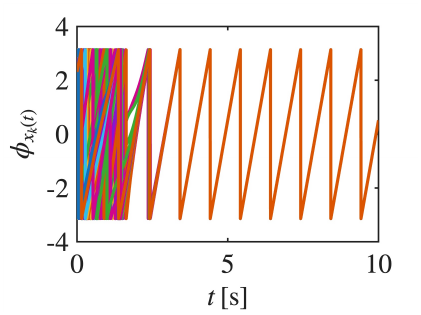} & 
    \includegraphics[width=0.47\columnwidth]{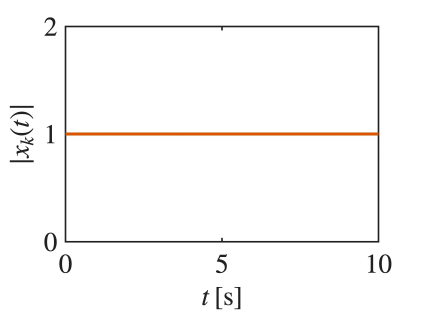} \\ [-1mm]
    \small(a) & \small(b)\\[-1mm]
    \includegraphics[width=0.47\columnwidth]{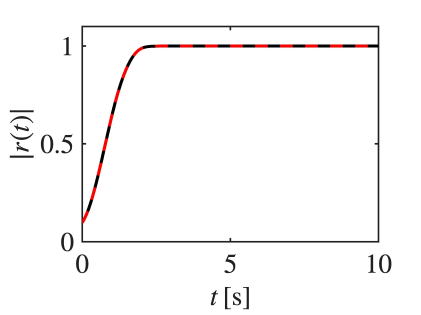} &
    \includegraphics[width=0.47\columnwidth]{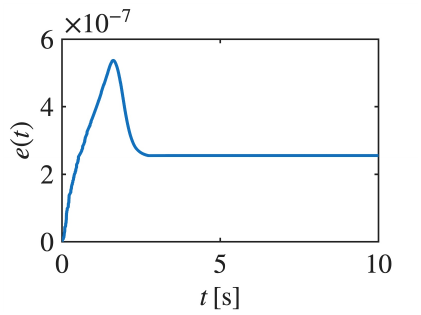} \\ [-1mm]
    \small(c) & \small(d)\\[-1mm]
  \end{tabular}
  \caption{Switched feedforward (Theorem~\ref{th:sfc}).
    (a)~Complex-valued arguments;
    (b)~magnitudes $\equiv1$; 
    (c)~order parameter (black for real Kuramoto and red for complex-valued model);
    (d)~mean absolute error $e(t)\approx0$ (exact matching).}
  \label{fig:ff}
\end{figure}

The three proposed control strategies are validated on an Erd\"{o}s-R\'{e}nyi network of
$N=100$ oscillators with edge probability $p=0.2$, identical natural
frequencies $\omega_k=2\pi\,\mathrm{rad}/\mathrm{s}$ for all $k$, and coupling
strength $\sigma=0.25$. Phase correspondence between the
complex-valued model and the real-valued Kuramoto dynamics is
evaluated using the mean absolute error
$e(t)=\frac{1}{N}\|\phi_x(t)-\theta(t)\|_1$.

\subsection{Switched feedforward control} \label{subsec:validation_ff}
For the switched feedforward controller of
Theorem~\ref{th:sfc}, the real-valued model is initialized
with phases $\theta_0$ drawn uniformly from $(-\pi,\pi)$,
while the complex-valued model is initialized at
$x_0=e^{i\theta_0}$, ensuring unit initial magnitudes.

The results are shown in Figure~\ref{fig:ff}. The phases of the complex-valued model achieve complete phase locking (panel~a), and behave as the real-valued model, as evidenced by the synchronization order parameters (panel~c). In agreement
with Theorem~\ref{th:sfc}, the magnitudes of the complex-valued
states remain identically equal to one throughout the simulation
(panel~b), and the mean absolute error $e(t)$ vanishes to
numerical precision for all $t\ge0$ (panel~d). This confirms
exact continuous-time phase correspondence between the two
models, which represents a strict improvement over the hybrid
reset approach of \cite{budzinski_2022}, where matching holds
only at discrete reset instants.

\subsection{Feedforward action with SMC} \label{subsec:validation_ff_smc}
The FF\,+\,SMC controller of Theorem~\ref{th:smc_ff} is
validated with gain $\alpha=10$. The initial phases are randomly chosen as in Section \ref{subsec:validation_ff} while the initial moduli of the
complex-valued model are drawn from the uniform distribution $\mathcal{U}(0,2)$ rather than
set to unity, to test the SMC component.

Figure~\ref{fig:ffsmc} shows that the phases of the complex-valued model achieve phase locking (panel a), while the magnitudes converge to the sliding
surface $\{|x|=\mathds{1}_N\}$ in finite time, reaching unity
well within the theoretical upper bound $T=0.78$\,s given by
\eqref{eq:rt1} (panel~b). There is a residual
constant phase offset in $e(t)$ at steady state with respect to the corresponding real-valued model. It arises because the two systems share identical initial phase
conditions but differ in their initial magnitudes; the transient
magnitude evolution modifies the argument dynamics
\eqref{eq:arg_ku} before the sliding surface is reached,
accumulating a phase shift that persists thereafter. The offset can
be made arbitrarily small by increasing $\alpha$ at the cost
of larger control effort. When $x_0=e^{i\theta_0}$ (unit
initial magnitudes), the controller \eqref{eq:u_smc} reduces
exactly to the switched feedforward law \eqref{eq:u_s} of
Theorem~\ref{th:sfc}, and zero offset is guaranteed for all
$t\ge0$. Compared with the state-feedback approach of
\cite{roberts_2008}, this scheme achieves finite-time rather
than asymptotic convergence and requires no spectral tuning
of the feedback gains.

\begin{figure}[t]
  \centering
  \begin{tabular}{cc}
    \includegraphics[width=0.47\columnwidth]{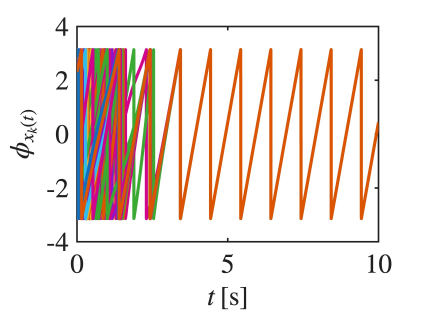} &
    \includegraphics[width=0.47\columnwidth]{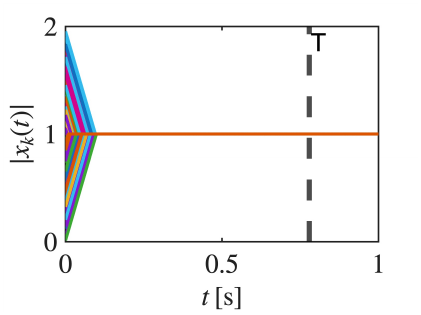} \\ [-1mm]
    \small(a) & \small(b) \\ [-1mm]
    \includegraphics[width=0.47\columnwidth]{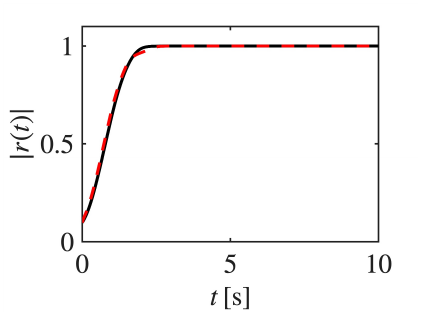} &
    \includegraphics[width=0.47\columnwidth]{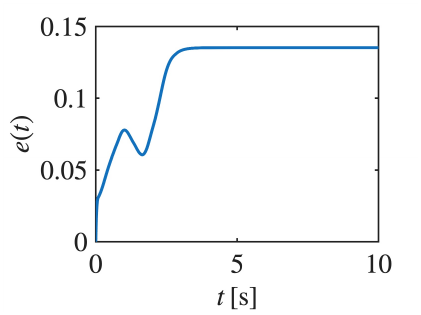}\\[-1mm]
    \small(c) & \small(d)\\[-1mm]
  \end{tabular}
  \caption{FF\,+\,SMC (Theorem~\ref{th:smc_ff}). 
    (a)~Complex-valued arguments;
    (b)~magnitudes converging to unity within $T\!=\!0.78$\,s;
    (c)~order parameter (black for real Kuramoto and red for complex-valued model);
    (d)~mean absolute error showing residual phase offset (bounded by $T$, tunable via $\alpha$).}
  \label{fig:ffsmc}
\end{figure}

\subsection{Synchronization via complex-valued SMC}
The complex SMC of Theorem~\ref{th:smc_particularized} is evaluated for a
desired synchronization frequency $\bar{\omega}=4\pi\,\mathrm{rad}/\mathrm{s}$,
with control gains $K_i=50$ for all $i$, thereby satisfying
condition~\eqref{eq:Ki}. Initial conditions are chosen as in
Section~\ref{subsec:validation_ff_smc}.

Figure~\ref{fig:smc_particularized} summarizes the results. Panel~(a) shows
that the arguments of the controlled complex-valued network
achieve phase locking at the prescribed frequency $\bar{\omega}$,
while panel~(b) confirms that all magnitudes converge to unity
in finite time, with the sliding manifold reached within the
theoretical bound $T=1.22$\,s given by \eqref{eq:rt2}. The
synchronization order parameter in panel~(c) further confirms
fast finite-time convergence; for comparison, it also includes the order parameter of the corresponding real-valued Kuramoto model, which synchronizes under the same conditions but at its natural frequency $2\pi\,\mathrm{rad}/\mathrm{s}$ rather
than the prescribed $4\pi\,\mathrm{rad}/\mathrm{s}$.

The robustness of the proposed strategy is assessed in
panel~(d), where the natural frequencies are drawn from a normal distribution
$\mathcal{N}(2\pi,1)$ and the coupling strength is reduced
to $\sigma=0.1$. Under these heterogeneous conditions, the
real-valued Kuramoto model fails to synchronize, yet the
SMC successfully enforces phase locking across the entire
network at the prescribed frequency. This result highlights
the fundamental advantage of the complex-valued control
formulation: by operating on the linear embedding rather than
the nonlinear phase equations, the controller can impose
synchronization even in regimes where the uncontrolled
real-valued model does not synchronize spontaneously.

\begin{figure}
  \centering
  \begin{tabular}{cc}
    \includegraphics[width=0.47\columnwidth]{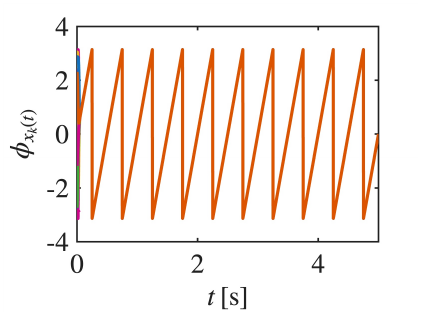} &
    \includegraphics[width=0.47\columnwidth]{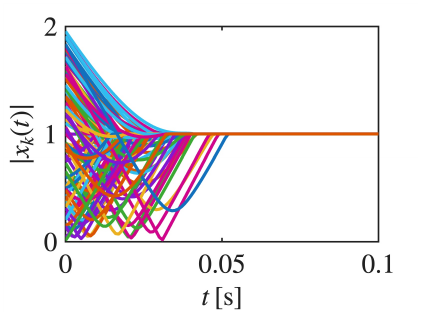}\\[-1mm]
    \small(a) & \small(b)\\[-1mm]
    \includegraphics[width=0.47\columnwidth]{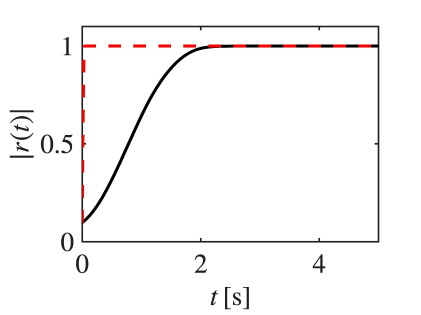} &
    \includegraphics[width=0.47\columnwidth]{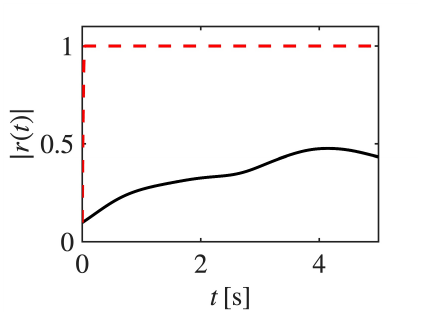}\\[-1mm]
    \small(c) & \small(d) \\ [-1mm]
  \end{tabular}
  \caption{Complex SMC (Theorem~\ref{th:smc_particularized}).
    (a)~Arguments locked at $\bar{\omega}=4\pi$;
    (b)~magnitudes converging to unity within $T\!=\!1.22$\,s;
    (c)~order parameter (black for real Kuramoto and red for complex-valued model), homogeneous network;
    (d)~order parameter, heterogeneous network
    ($\omega_k\!\sim\!\mathcal{N}(2\pi,1)$, $\sigma\!=\!0.1$):
    real-valued model fails; SMC enforces synchronization.}
  \label{fig:smc_particularized}
\end{figure}

% ================================================================
\section{Conclusions}
\label{sec:conc}
This work introduced a unified control-theoretic framework for complex-valued Kuramoto oscillator networks and proposed three switched control strategies that overcome key limitations of existing methods. By interpreting the state-feedback approach of \cite{roberts_2008} and the hybrid reset method of \cite{budzinski_2022} as mechanisms to enforce a common modulus, we identified a shared control objective and used it to guide improved designs.

Within this framework, the switched feedforward law (Theorem~\ref{th:sfc}) achieves exact continuous-time phase correspondence with classical Kuramoto dynamics for all $t\ge0$, eliminating the drift-induced mismatch inherent to reset-based schemes. The combined feedforward-SMC law (Theorem~\ref{th:smc_ff}) relaxes the unit-magnitude initial condition and ensures finite-time convergence with an explicit, tunable reaching-time bound, improving upon the asymptotic result of \cite{roberts_2008} without spectral gain tuning. Finally, the prescribed-frequency SMC law (Theorem~\ref{th:smc_particularized}), based on a novel non-autonomous extension of complex-valued MIMO sliding-mode control, enforces phase locking at a desired frequency in finite time, independent of natural frequencies and coupling strengths. This result is particularly significant, enabling synchronization of heterogeneous networks where the classical Kuramoto model fails.

Future work will explore output-feedback designs, extensions to directed and time-varying networks, tighter gain conditions exploiting sparse graph spectra, and applications to synchronization problems in power systems and neuromorphic computing.

% ================================================================
\section*{Acknowledgments}
The authors gratefully acknowledge Prof.~Marco Coraggio for his careful reading of the manuscript and for his valuable comments and suggestions.

% ================================================================
\bibliographystyle{IEEEtran}
\bibliography{references}

\end{document}